\documentclass{article}

\usepackage{graphicx}
\usepackage{subfigure}
\usepackage{booktabs}

\usepackage{amsmath}
\usepackage{amssymb}
\usepackage{mathtools}
\usepackage{amsthm}
\usepackage[skip=0pt, labelfont={bf,it}, textfont=it]{caption}
\usepackage{sidecap, wrapfig}
\usepackage{verbatim}
\usepackage[textsize=tiny]{todonotes}

\theoremstyle{plain}
\newtheorem{theorem}{Theorem}[section]

\newtheorem{lemma}[theorem]{Lemma}

\theoremstyle{definition}

\theoremstyle{remark}

\newcommand{\req}[1]{(\ref{#1})}

\newcommand\Eps{\alpha}

\newcommand{\finalversion}[1]{{#1}}
\newcommand{\anonymizedversion}[1]{}

\usepackage[preprint]{neurips_2022_arxiv}

\usepackage[T1]{fontenc}    \usepackage{hyperref}       \usepackage{url}            \usepackage{booktabs}       \usepackage{amsfonts}       \usepackage{nicefrac}       \usepackage{microtype}      \usepackage{xcolor}

\title{Improved Utility Analysis of Private CountSketch}
\hyphenation{Count-Sketch}

\author{Rasmus Pagh\\
  Basic Algorithms Research Copenhagen\\
  University of Copenhagen \\
  \texttt{pagh@di.ku.dk} \\
  \And
  Mikkel Thorup\\
  Basic Algorithms Research Copenhagen\\
  University of Copenhagen \\
  \texttt{mikkel2thorup@gmail.com} \\
}

\begin{document}

\maketitle

\begin{abstract}
Sketching is an important tool for dealing with high-dimensional vectors that are sparse (or well-approximated by a sparse vector), especially useful in distributed, parallel, and streaming settings.
It is known that sketches can be made differentially private by adding noise according to the sensitivity of the sketch, and this has been used in private analytics and federated learning settings.
The post-processing property of differential privacy implies that \emph{all} estimates computed from the sketch can be released within the given privacy budget.

In this paper we consider the classical CountSketch, made differentially private with the Gaussian mechanism, and give an improved analysis of its estimation error.
Perhaps surprisingly, the privacy-utility trade-off is essentially the best one could hope for, independent of the number of repetitions in CountSketch:
The error is almost identical to the error from non-private CountSketch plus the noise needed to make the vector private in the original, high-dimensional domain.
\end{abstract}

\section{Introduction}

\emph{CountSketch} was introduced by~\citet*{charikar2004finding} as a method for finding heavy hitters in data streams.
In machine learning the same sketch was studied by~\citet*{weinberger2009feature} under the name \emph{feature hashing}, or less formally the ``hashing trick''.
The idea is to map a high-dimensional vector $x\in{\bf R}^d$ to a lower-dimensional representation $Ax \in {\bf R}^D$, using a certain random linear mapping $A$.
From $Ax$ it is possible to estimate entries $x_i$, with error that depends on $D$ and the norm of $x$, and more generally to estimate inner products.
It was shown in~\citep{weinberger2009feature} that for $y\in{\bf R}^d$, $\langle Ax, Ay\rangle$ is well-concentrated around $\langle x, y\rangle$.
Recent interest in sketching techniques is motivated by distributed applications such as \emph{federated learning}~\citep{kairouz2021advances}, where many users contribute to creating a model on their combined data, without transferring data itself.
In this context, sketching can be used to identify the large-magnitude entries in the sum of many high-dimensional vectors (one held by each user), using communication proportional to the number of such entries.

Perhaps the most basic property of CountSketch/feature hashing is the error of estimators for a single coordinate $x_i$.
The sketch directly provides independent, noisy estimators $X_1,\dots,X_k$, each symmetric with mean~$x_i$.
Two main ways of combining these into a more reliable estimator for~$x_i$ have been studied: Taking the \emph{median} estimator~\citep{charikar2004finding}, or taking the \emph{average} of estimators (implicit in~\citet{weinberger2009feature}).
\citet*{mintonP14improved} showed that using the median estimator with $k=\Theta(\log d)$ is not only is more robust, but alse decreases the variance of the estimator by a factor of $\Omega(\log d)$.
Later \citet*{larsen2021countsketches} showed that using the median estimator with $k=3$ makes the error decrease quadratically with the sketch size~$D$.

In this paper we consider CountSketches that have been made differentially private using the \emph{Gaussian mechanism}, which is perhaps the most widely used mechanism for making high-dimensional vectors differentially private.
This sketch, which we will refer to as \emph{Private CountSketch}, works by adding i.i.d.~Gaussian noise to each coordinate of the CountSketch, with variance scaled according to the (squared $L_2$) sensitivity $k$ of the sketch.

We give additional evidence that the median estimator is preferable to the mean by showing that its estimation noise is independent of the number $k$ of estimators. 
Previous applications of sketching techniques to differential privacy have had noise that grew (polynomially) with $k$. 
The failure probability of CountSketch decreases exponentially in $k$, so it is preferable to have $k$ relatively large, and thus these methods faced a trade-off between the noise from the Gaussian mechanism and the error probability of the underlying CountSketch.
We show that this trade-off is not needed: It is possible to \emph{both} get a highly reliable CountSketch (by choosing $k$ large enough) \emph{and} achieve noise that depends only on the privacy parameters (independent of $k$).

\nocite{dwork2016calibrating}

\section{Related work}

The analysis of the Gaussian mechanism is attributed in~\citet*[Appendix A]{dwork2014algorithmic} to the the authors of the seminal paper on differential privacy~\citep*{dwork2006calibrating}. 
It has since been shown to have desirable properties related to keeping track of privacy loss under composition, see e.g.~\citet{bun2016concentrated, mironov2017renyi}.

\paragraph{Local differential privacy.}
A variant of CountSketch, the \emph{Count Mean Sketch} has been used by Apple to privately collect information on heavy hitters, e.g. popular emoji in different countries~\citep{apple2017learning}. This can be phrased in terms of summing $n$ user vectors, each one a 1-hot vector with a single~1, and identifying the large entries.
Their protocol works in the \emph{local} model of differential privacy, which means that each report is differentially private.
Lower bounds on local differential privacy~\citep*{chan2012optimal} imply that the noise on estimates in this setting grow with $\Omega(\sqrt n)$, where $n$ is the number of users.

Independently, \citet*{bassily2020practical} improved theoretical results of \citet*{bassily2015local} on heavy hitter estimation in the local model, providing practical methods for matching the lower bound of~\citep{chan2012optimal}.
They also base their method on CountSketch, though it is made private using a sampling technique combined with randomized response, not by adding noise to each coordinate.
The noise from this step, rather than from CountSketch itself, dominates the error.
\citet*{acharya2019hadamard} showed that similar error can be achieved without any agreed-upon public randomness (with no need for a common, random sketch matrix).

\citet*{huang2022frequency} used CountSketch, made private with the geometric (aka.~discrete Laplace) mechanism, to design protocols for frequency estimation under local differential privacy and multiparty differential privacy.
They observe that the estimator in each repetition of CountSketch is symmetric (assuming fully random hashing), but unlike the present paper they do not demonstrate that the median estimator has noise that is \emph{smaller} than the noise added to each estimator.
Thus their estimation error bound grows with the number of repetitions of CountSketch.

\citet*{zhou2022locally} recently used CountSketch as the basis for a mechanism that releases $t$-sparse vectors with differential privacy. We relate our result to theirs in Section~\ref{sec:comparison}. 

Independent of our work, \citet*{zhao2022differentially} recently presented a comprehensive study of differentially private linear sketches, including Private CountSketch. 
They focus on bounding the maximum error, but also have theoretical results for point estimates that are weaker than ours. Their experiments confirm the performance of Private CountSketch in practice. In addition they show how CountSketch can be used to build a sketch for quantile approximation, and our results imply tighter analysis for that application.

\paragraph{Protocols based on cryptography.}

Motivated by privacy-preserving aggregation of distributed statistics, \citet*{melisDC16efficient} considered CountSketch (and the related Count-Min sketch) made $\varepsilon$-differentially private using the \emph{Laplace mechanism}~\citep{dwork2016calibrating}.
They empirically showed that the set of \emph{heavy hitters}, i.e., large entries in the vector, could be identified with very little error on skewed distributions, but did not provide general bounds on estimation error.
Their protocol can be implemented in a distributed setting in which each user has a 1-hot vector, and we want to sketch the sum of these vectors, using cryptographic protocols for secure aggregation.
This bypasses lower bounds on local differential privacy, and yields much better privacy-utility trade-offs (but now assuming security of the aggregation protocols, so guarantees rely on cryptographic assumptions and are not information-theoretic).

Another model of differential privacy, built on the cryptographic primitive of an \emph{anonymous channel} (e.g.~implemented as a \emph{mixnet}), is the \emph{shuffle model}~\citep*{cheu2019distributed, erlingsson2019amplification}.
\citet*{ghazi2021power} used Count-Min sketches in a protocol for frequency estimation and heavy hitters in this context.
Our work suggests that better accuracy can be obtained by using CountSketch instead of Count-Min, especially for high-probability bounds.

\paragraph{Related results in the central model.}

\citet*{mir2011pan} studied general mechanisms for making sketching algorithms differentially private, and in particular studied using the Count-Min sketch to identify heavy hitters.
They observed that any mechanism that makes a linear sketch private by adding (oblivious) noise to the sketch vector implies a \emph{pan-private} sketch that can be updated dynamically.
Our work implies an improved pan-private sketch based on CountSketch.

\citet*{aumuller2021differentially} studied differentially private representations of sparse vectors, showing that it is possible to achieve space close to the number of non-zero entries while keeping noise comparable to a na\"ive application of the Laplace mechanism to the raw vectors.
Our work implies that, up to a logarithmic factor in space, such a result is possible with a linear sketch, which enjoys many desirable properties.

\paragraph{Analysis of CountSketch.}

Several authors have worked on improved analysis of the error of CountSketch in the non-private setting, including \citet*{mintonP14improved} and \citet*{larsen2021countsketches}.
It seems possible to analyze Private CountSketch using the framework of \citep{mintonP14improved}, but we will pursue an elementary, direct analysis that does not depend on the Fourier transform of random variables.

\section{Private CountSketch}

In this section we provide the necessary background information on CountSketch, and present our improved analysis of Private CountSketch.

\subsection{Notation and background}

We use $[k]$ to denote the set $\{1,\dots,k\}$.
Vectors are indexed by one or more integers, each from designated ranges.
For example, a vector of dimension $D = k b$ may be indexed by $(i,j)$ where $i\in [k]$ and $j\in [b]$.
The ordering of vector entries is not important.
For integer $k$, let $\text{tail}_k(x)$ denote the vector that is identical to $x$ except that the $k$ coordinates of largest magnitude are replaced with zeros.
For a predicate $P$, we let $[P]$ denote the indicator that is $1$ if $P$ is true and $0$ otherwise.

\medskip

CountSketch is a random linear mapping of a vector $x\in{\bf R}^d$ to $Ax\in{\bf R}^D$.
For suitable parameters $k$, $b$ such that $D=kb$, the sketch $CS(x)$ is defined in terms of two sequences of random, independent hash functions:
\begin{itemize}
	\item $h_1,\dots,h_k: [d] \rightarrow [b]$, and
	\item $s_1,\dots,s_k: [d] \rightarrow \{-1,+1\}$.
\end{itemize}
CountSketch was originally presented with hash functions from a 2-wise independent families~\citep{charikar2004finding}, but in this paper we assume that all hash functions used are fully random.
(Full randomness is used in our analysis of error, but the privacy of our method does not depend on this assumption.)
To simplify our exposition we will further assume that $k$ is odd and that $b$ is even.

If we index $CS(x)\in{\bf R}^D$ by $(i,j)\in[k]\times[b]$, then
\[ CS(x)_{ij} = \sum_{\ell \in [d]} s_i(\ell) x_\ell [h_i(\ell) = j] \enspace . \]
That is, each vector entry $x_\ell$ is added, with sign $s_i(\ell)$, to entries indexed by $(i,h_i(\ell))$, for $i\in [k]$.
We can see the sketch as a sequence of $k$ hash tables, each of size $b$; thus we refer to $k$ as the number of \emph{repetitions} and to $b$ as the \emph{table size}.
It is easy to see that
\[
X_i = s_i(\ell) CS(x)_{i,h_i(\ell)}
\]
is an unbiased estimator for $x_\ell$ for each $i\in [k]$.
Furthermore, since $s$ is fully random the distribution of $X_i$ conditioned on $s_i(\ell)=-1$ is identical to the distribution of $X_i$ conditioned on $s_i(\ell)=1$, which means that $X_i$ is symmetric around $x_\ell$.
We can combine these estimators into a more robust estimator:
\begin{equation}\label{eq:estimator}
\hat{x}_\ell = \text{median}\left(\{ s_i(\ell) CS(x)_{i,h_i(\ell)} \; | \; i\in [k] \}\right) \enspace .
\end{equation}

\citet{mintonP14improved} bounded the error of $\hat{x}_\ell$, with a failure probability that is exponentially decreasing with $k$:
\begin{theorem}[\citet{mintonP14improved}]\label{thm:count-sketch}
		For every $\Eps\in [0,1]$ and every $\ell\in [d]$, the estimation error of
                CountSketch with $k$ repetitions and table size $b$
                satisfies
		\[\Pr\left[ |\hat{x}_\ell - x_\ell| > \Eps\,\Delta\right] < 2\exp\left(-\Omega\left(\Eps^2 k \right)\right) \enspace\textnormal, \]
                where $\Delta=||\text{tail}_{b}(x)||_2 / \sqrt{b}$.
\end{theorem}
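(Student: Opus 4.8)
The plan is to reduce the concentration of the median estimator $\hat x_\ell$ to a binomial tail bound, and then to isolate the one genuinely nontrivial ingredient: an anti-concentration (small-ball) estimate for a single repetition. Write $E_i = X_i - x_\ell = s_i(\ell)\sum_{m\neq\ell} s_i(m)\,x_m\,[h_i(\ell)=h_i(m)]$ for the error of the $i$-th estimator. Conditioned on the bucket $h_i(\ell)$, each $m\neq\ell$ collides with $\ell$ independently with probability $1/b$, and the products $s_i(\ell)s_i(m)$ are independent Rademacher signs, so $E_i$ is symmetric and the $E_i$ are i.i.d.\ across $i\in[k]$. Since $k$ is odd, the median exceeds $x_\ell+\Eps\Delta$ only if at least $(k+1)/2$ of the $X_i$ exceed $x_\ell+\Eps\Delta$; arguing symmetrically for the lower tail and taking a union bound,
\[
\Pr[\,|\hat x_\ell - x_\ell|>\Eps\Delta\,]\;\le\;2\,\Pr[\,\mathrm{Bin}(k,p)\ge (k+1)/2\,],\qquad p:=\Pr[E_i>\Eps\Delta].
\]
By symmetry of $E_i$ one has $\Pr[|E_i|\le\Eps\Delta]=1-2p$, so $p\le\tfrac12$ and the gap is exactly $\tfrac12-p=\tfrac12\Pr[|E_i|\le\Eps\Delta]$. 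Hence, once I show $\Pr[|E_i|\le\Eps\Delta]=\Omega(\Eps)$, a Hoeffding bound on $\mathrm{Bin}(k,p)$ with $p\le\tfrac12-\Omega(\Eps)$ gives $\Pr[\mathrm{Bin}(k,p)\ge k/2]\le\exp(-\Omega(\Eps^2 k))$, which is the claimed estimate.

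The heart of the argument is therefore the anti-concentration bound $\Pr[|E_i|\le\Eps\Delta]=\Omega(\Eps)$, and this is exactly where the tail norm $\Delta=\|\text{tail}_b(x)\|_2/\sqrt b$ enters in place of $\|x\|_2/\sqrt b$. Let $H$ be the set of the $b$ coordinates of largest magnitude and $L=[d]\setminus(H\cup\{\ell\})$ the light coordinates. The expected number of heavy coordinates colliding with $\ell$ is at most $|H|/b\le 1$, so the event $G$ that no heavy coordinate lands in $\ell$'s bucket has probability $\Pr[G]\ge(1-1/b)^{b}\ge c_0$ for an absolute constant $c_0>0$. Conditioned on $G$, the error reduces to the light-coordinate collision sum $E_i^L=\sum_{m\in L}s_i(\ell)s_i(m)\,x_m\,[h_i(\ell)=h_i(m)]$, whose conditional variance equals $\tfrac1b\sum_{m\in L}x_m^2\le\Delta^2$. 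It thus suffices to prove $\Pr[\,|E_i^L|\le\Eps\Delta\,]=\Omega(\Eps)$ for a symmetric sum of independent terms of variance at most $\Delta^2$, since then $\Pr[|E_i|\le\Eps\Delta]\ge\Pr[G]\cdot\Pr[|E_i^L|\le\Eps\Delta\mid G]=\Omega(\Eps)$.

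I expect this small-ball estimate to be the main obstacle, because $E_i^L$ need not be absolutely continuous and the bound must hold uniformly across two regimes. When the light mass is spread over many coordinates, $E_i^L$ is close to $\mathcal N(0,\Delta^2)$ and $\Pr[|E_i^L|\le\Eps\Delta]\approx\Pr[|\mathcal N(0,1)|\le\Eps]=\Theta(\Eps)$; when it is concentrated on a few coordinates, the small-ball probability is instead driven by a constant-size atom at $0$ arising from the event that those few coordinates do not collide. To cover both cases in one stroke I would pass to the characteristic function $\phi_L(t)=\prod_{m\in L}\bigl(1-\tfrac1b(1-\cos(t\,x_m))\bigr)$, which is real and nonnegative, and bound the small-ball probability from below by a Fourier-inversion estimate of Esseen type, comparing $\phi_L$ on the scale $|t|\lesssim 1/\Delta$ to the Gaussian transform $\exp(-\Delta^2 t^2/2)$.

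The technical care lies in controlling $\phi_L$ away from $t=0$ so that the inversion integral is genuinely governed by its near-Gaussian central part, and in handling the lattice case where $E_i^L$ is purely atomic. The preliminary conditioning on $G$ is precisely what keeps the relevant variance pinned at most $\Delta^2$ and prevents the heavy coordinates, which carry most of $\|x\|_2^2$, from forcing $\phi$ to decay on a scale smaller than $1/\Delta$ and thereby shrinking the effective window; this is the step that upgrades $\|x\|_2$ to the tail norm. Once the $\Omega(\Eps)$ small-ball bound is in hand, the remaining Hoeffding computation from the first paragraph is routine and delivers $2\exp(-\Omega(\Eps^2 k))$.
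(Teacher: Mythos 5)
Your proposal follows, in essence, the proof of \citet{mintonP14improved} --- which is the right target here, since the paper itself gives no proof of this theorem: it imports it from that work, and its own analysis only re-derives it as the $\sigma=0$ case of Theorem~\ref{thm:main}, still citing MP for the anti-concentration property~(c). Your first step is correct and complete: for odd $k$ the median deviates above only if at least $(k+1)/2$ estimators do, symmetry gives $p=\Pr[E_i>\Eps\Delta]=\tfrac12-\tfrac12\Pr[|E_i|\le\Eps\Delta]$, and Hoeffding yields $2\exp(-q^2k/2)$ with $q=\Pr[|E_i|\le\Eps\Delta]$ --- exactly the paper's inequality \req{eq:MP-symm-median} (MP Lemma~3.3), constant included. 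Your heavy/light split, the $\Omega(1)$-probability event $G$, and the conditional variance bound $\tfrac1b\sum_{m\in L}x_m^2\le\Delta^2$ likewise mirror MP's proof of their Theorem~4.1 (they condition on the top coordinates missing the bucket in the same way, which is why the paper's recap of property~(c) carries $\text{tail}_{b/2}$).

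The one genuine gap is the small-ball estimate itself, and there one sentence of your reduction is false as stated: it does \emph{not} suffice that $E_i^L$ is ``a symmetric sum of independent terms of variance at most $\Delta^2$''. A single term equal to $\pm\Delta$ with probability $\tfrac12$ each is such a sum, yet $\Pr[|E_i^L|\le\Eps\Delta]=0$ for every $\Eps<1$. What rescues the claim is precisely the structure you write down afterwards: each factor of $\phi_L$ equals $1-\tfrac1b(1-\cos(tx_m))\ge 1-\tfrac2b\ge0$ for $b\ge2$, so $\phi_L\ge0$ pointwise. This nonnegativity is MP's central observation, not a side remark, and once you have it your program is also more complicated than needed: there is no reason to control $\phi_L$ away from the origin, to compare it with $e^{-\Delta^2t^2/2}$, or to treat lattice/atomic cases separately. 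A one-sided bound suffices. Let $\sigma_L^2=\textnormal{Var}(E_i^L)\le\Delta^2$, let $f(y)=\max\{0,1-|y|/(\Eps\Delta)\}$, and let $\hat f\ge0$ be its (Fej\'er-kernel) Fourier transform; then
\[
\Pr\bigl[|E_i^L|\le\Eps\Delta\bigr]\;\ge\;\E\bigl[f(E_i^L)\bigr]\;=\;\int \hat f(t)\,\phi_L(t)\,dt\;\ge\;\int_{|t|\le1/\Delta}\hat f(t)\,\phi_L(t)\,dt\;\ge\;\tfrac12\int_{|t|\le1/\Delta}\hat f(t)\,dt\;=\;\Omega(\Eps)\enspace,
\]
where the second inequality discards a nonnegative integrand (this is where $\phi_L\ge0$ and $\hat f\ge0$ are both used), the third uses $\phi_L(t)\ge1-t^2\sigma_L^2/2\ge\tfrac12$ for $|t|\le1/\Delta\le1/\sigma_L$, and the last is a direct computation with $\hat f(t)=\tfrac{\Eps\Delta}{2\pi}\bigl(\sin(\Eps\Delta t/2)/(\Eps\Delta t/2)\bigr)^2$, which is $\Omega(\Eps\Delta)$ on $|t|\le1/\Delta$ because $\Eps\le1$. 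With this lemma in place, your Hoeffding step from the first paragraph completes the proof.
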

Note that $||\text{tail}_{b}(x)||_2 \leq ||x||_2$, but can be much smaller for skewed distributions and even zero for sparse vectors.
Choosing suitable $\Eps^2k = \Theta(\log d)$ we get that \emph{all} coordinates $x_\ell$ are estimated within this error bound with high probability.

\subsection{Private CountSketch}

Several approaches for releasing a CountSketch with differential privacy have been studied.
These all require that the sketches of \emph{neighboring} vectors $x, x'$, denoted $x \sim x'$, have similar distributions.
The neighboring relation that we consider is that $x \sim x'$ if and only if these vectors differ in at most one entry, by at most 1, or equivalently that $x-x'$ has a single nonzero entry with a value in $[-1,+1]$.
Since CountSketch is linear, $CS(x)-CS(x') = CS(x - x')$.
From this it is easy to see that for $x \sim x'$, $||CS(x)-CS(x')||_2 \leq \sqrt{k}$.
In differential privacy terminology, the \emph{sensitivity} of the sketch is $\sqrt{k}$.

There are many ways to compute a differentially private version of a CountSketch $CS(x)$ (or any other function of~$x$), but it is most common to consider \emph{oblivious} methods that work by sampling a symmetric noise vector $\nu\in {\bf R}^D$ that is independent of $CS(x)$ and releasing the private CountSketch:
\[
PCS(x) = CS(x) + \nu \enspace .
\]

Since the noise is symmetric around zero, we have (similar to before) that $X_i = s_i(\ell) PCS(x)_{i,h_i(\ell)}$ is an unbiased estimator for $x_\ell$, and therefore it makes sense to use the median estimator (\ref{eq:estimator}) also for Private CountSketch:
\begin{equation}\label{eq:private-estimator}
\tilde{x}_\ell = \text{median}\left(\{ s_i(\ell) PCS(x)_{i,h_i(\ell)} \; | \; i\in [k] \}\right) \enspace .
\end{equation}

Note that except for the addition of $\nu$, which can be considered an alternative initialization step, Private CountSketch works in \emph{exactly the same way} as CountSketch:
Updates are done in the same way, and estimators $\bar{x}_\ell$ are computed in the same way.

\paragraph{Properties of Private CountSketch.}

Since $PCS(x)$ is an affine transformation, it can be maintained under updates to $x$.
Also, it is possible to add and subtract Private CountSketches (based on the same hash functions), e.g.:
\[ PCS(x) + PCS(y) = CS(x + y) + \nu_1 + \nu_2, \]
where $\nu_1$ and $\nu_2$ are sampled according to the noise distribution.

The noise distribution we will consider in this paper is the standard $D$-dimensional \emph{Gaussian} distribution $\mathcal{N}(0,\sigma^2)^D$ with mean zero and variance $\sigma^2$, and in the rest of this paper we refer to $PSC(x)$ with noise $\nu\sim\mathcal{N}(0,\sigma^2)^D$.
Differential privacy properties of $PCS(x)$ follows from general results on the Gaussian mechanism, see e.g.~\citet[Apendix A]{dwork2014algorithmic} and \citet{bun2016concentrated}. We state some of them for convenience here:
\begin{lemma}
	$PCS(x)$ is $(\varepsilon,\delta)$-differentially private for $\varepsilon$, $\delta$ satisfying $\sigma^2 > 2k \ln(1.25/\delta) / \varepsilon^2$ and $\varepsilon < 1$, and it is $(k/2\sigma^2)$-zero-concentrated differentially private.
\end{lemma}
This tells us that to get good privacy parameters, we need to use Gaussian noise with variance $\Omega(k)$, i.e., standard deviation $\sigma$ at least $\Omega(\sqrt{k})$.
The median estimator (\ref{eq:private-estimator}) returns, for some $i$,
\[ s_i(\ell) PCS(x)_{i,h_i(\ell)} =  s_i(\ell) \left( CS(x)_{i,h_i(\ell)} + \nu_{i, h_i(\ell)} \right) \]
i.e., a value that could be returned by CountSketch plus a noise term with standard deviation $\Omega(\sqrt{k})$.
Thus, we might expect the magnitude of the noise to scale proportionally to~$\sqrt{k}$.
Our main result is that this does \emph{not} happen, and in fact the magnitude of the noise can be bounded independently of $k$.

\begin{theorem}\label{thm:main}
		For every $\Eps\in [0,1]$ and every $\ell\in [d]$, the estimation error of Private CountSketch with $k$ repetitions, table size $b$,
               and noise from $\mathcal{N}(0,\sigma^2)^{kb}$ satisfies
		\[\Pr\left[ |\hat{x}_\ell - x_\ell| > \Eps\,\max\{\Delta,\sigma\}\right] < 2\exp\left(-\Omega\left(\Eps^2 k \right)\right) \enspace\textnormal, \]
                where $\Delta=||\text{tail}_{b}(x)||_2 / \sqrt{b}$.
\end{theorem}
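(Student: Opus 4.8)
The plan is to follow the standard median-of-estimators template and reduce the whole statement to a single anti-concentration inequality for one repetition. Fix $\ell$ and write each per-repetition estimator as $X_i = x_\ell + C_i + G_i$, where $C_i = \sum_{m\neq \ell} s_i(\ell)s_i(m)x_m[h_i(m)=h_i(\ell)]$ is the collision noise and $G_i = s_i(\ell)\nu_{i,h_i(\ell)}\sim\mathcal{N}(0,\sigma^2)$ is the privacy noise; the two are independent and each is symmetric about $0$. Because $X_i-x_\ell$ is symmetric with a density (the Gaussian smooths away any atom), $\Pr[X_i-x_\ell>t]=\tfrac12-\tfrac12\,q(t)$ with $q(t):=\Pr[|X_i-x_\ell|\le t]$. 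The median exceeds $x_\ell+t$ only if at least $k/2$ of the $k$ independent indicators $[X_i-x_\ell>t]$ fire, so Hoeffding's inequality gives $\Pr[\tilde x_\ell - x_\ell>t]\le\exp(-k\,q(t)^2/2)$, and symmetrically for the lower tail. Setting $t=\Eps\max\{\Delta,\sigma\}$, the theorem follows once I show the anti-concentration bound $q(t)\ge c\,\Eps$ for an absolute constant $c>0$ and all $\Eps\in[0,1]$; the leading factor $2$ comes from the two tails.

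To isolate the tail norm $\Delta$, I first remove the heavy hitters. Let $H$ be the set of indices of the $b$ largest-magnitude entries. Conditioned on the bucket $h_i(\ell)$, each element of $H\setminus\{\ell\}$ lands in that bucket independently with probability $1/b$, so with probability at least $(1-1/b)^{b}\ge 1/4$ no heavy element collides with $\ell$. On this event $C_i$ equals the tail collision noise $C_i^{T}=\sum_{m\notin H,\,m\neq\ell}s_i(\ell)s_i(m)x_m[h_i(m)=h_i(\ell)]$, which is independent of the heavy-collision event and symmetric with variance $\E[(C_i^{T})^2]\le\tfrac1b\|\text{tail}_b(x)\|_2^2=\Delta^2$. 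Hence it suffices to prove $\Pr[|C_i^{T}+G_i|\le t]\ge\Omega(\Eps)$, treating $C_i^{T}$ as a symmetric variable of variance at most $\Delta^2$, independent of $G_i$.

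Writing $\Lambda=\max\{\Delta,\sigma\}$, the easy regime is $\sigma\ge\Delta$. Here $\E[(C_i^{T})^2]\le\Delta^2\le\sigma^2$, so Markov gives $\Pr[|C_i^{T}|\le 2\sigma]\ge 3/4$; combined with the uniform lower bound on the Gaussian density on $[-3\sigma,3\sigma]$ (valid since $t\le\sigma$), this yields $\Pr[|C_i^{T}+G_i|\le t]\ge \tfrac34\cdot\Omega(t/\sigma)=\Omega(\Eps)$. The substantive regime is $\Delta>\sigma$, where the Gaussian alone is too fine to cover an interval of width $t=\Eps\Delta$. There I reduce everything to a small-ball estimate for the pure collision noise, namely
\[
\Pr[|C_i^{T}|\le s]\ \ge\ \Omega(s/\Delta)\qquad\text{for all }0<s\le\Delta .
\]
Granting this, for $t\ge\sigma$ I use $\Pr[|C_i^{T}+G_i|\le t]\ge\Pr[|C_i^{T}|\le t/2]\,\Pr[|G_i|\le t/2]\ge\Omega(t/\Delta)\cdot\Omega(1)=\Omega(\Eps)$, and for $t<\sigma$ I place $C_i^{T}$ within $\sigma/2$ of $0$ (small-ball at scale $\sigma/2$, probability $\Omega(\sigma/\Delta)$) and pay the Gaussian density factor $\Omega(t/\sigma)$, the product again being $\Omega(t/\Delta)=\Omega(\Eps)$.

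It remains to prove the small-ball bound, and this is the step I expect to be the main obstacle, since it is exactly where \citet{mintonP14improved} invoke the Fourier transform and where an elementary argument is delicate. My intended route is to condition on the random set $S$ of tail elements colliding with $\ell$'s bucket and write $C_i^{T}=\sum_{m\in S}s_i(\ell)s_i(m)x_m$, a Rademacher sum with $\E[\sum_{m\in S}x_m^2]\le\Delta^2$. If one had a per-set Rademacher small-ball bound $\Pr[|C_i^{T}|\le s\mid S]\ge c\min\{1,s/\Delta_S\}$ with $\Delta_S^2=\sum_{m\in S}x_m^2$, then the elementary inequalities $\min\{1,r\}\ge r/(1+r)$, convexity of $x\mapsto s/(s+x)$, and $\E\Delta_S\le(\E\Delta_S^2)^{1/2}\le\Delta$ would give $\E_S[\min\{1,s/\Delta_S\}]\ge s/(s+\Delta)\ge s/(2\Delta)$, finishing the proof. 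The difficulty is precisely that this per-set bound fails when $S$ contains a dominant coefficient (e.g.\ a single colliding element), so that $C_i^{T}$ can have no mass in $[-s,s]$ for that $S$. The resolution must exploit that a dominant tail element collides only with probability $1/b$---so in its absence the remaining mass sits near $0$---while relying on a central-limit/variance argument when many comparable elements collide; making this dichotomy quantitative and uniform over all scales $s\le\Delta$ is the crux of the analysis.
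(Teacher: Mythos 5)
Your skeleton is sound and is, in essence, the paper's own argument: your Hoeffding bound for the median of symmetric variables is exactly the inequality \req{eq:MP-symm-median} that the paper imports from \citet[Lemma 3.3]{mintonP14improved}; your heavy-hitter conditioning (no element of $H$ lands in $\ell$'s bucket with probability $(1-1/b)^{b}\geq 1/4$, after which the collision noise has variance at most $\Delta^2$) is a clean way to get the $\text{tail}_b$ form of $\Delta$; and your three-regime case analysis for combining $C_i^{T}$ with the Gaussian is a correct variant of the paper's Lemma~\ref{lem:add-Gauss}. All of those steps check out.

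The genuine gap is the one you flag yourself: the small-ball estimate $\Pr[|C_i^{T}|\leq s]=\Omega(s/\Delta)$ for $0<s\leq\Delta$ is asserted, not proven, and every regime with $\Delta>\sigma$ rests on it. The route you sketch does not close it: as you concede, the per-set Rademacher bound is simply false when the colliding set contains a coefficient larger than $s$, and the dominant-element/many-elements dichotomy is not carried out -- note that its naive quantitative version also fails, since there can be up to $b\Delta^2/s^2$ tail elements of magnitude exceeding $s$, so the event that none of them collides has probability roughly $e^{-\Delta^2/s^2}$, far below the required $\Omega(s/\Delta)$ when $s\ll\Delta$; one genuinely needs cancellation among colliding terms, which is where the elementary approach becomes hard. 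What you should know is that this is precisely the step the paper does \emph{not} prove either: it invokes it as property (c), citing the proof of Theorem 4.1 (via Corollary 3.2) of \citet{mintonP14improved}, where it is established by Fourier analysis -- non-negativity of the characteristic function of the per-repetition error, valid for $b>1$. Your conditional variable $C_i^{T}$ has characteristic function $\prod_{m}\left(1-1/b+\cos(x_m t)/b\right)\geq 0$ for $b\geq 2$, so the same black-box citation applies verbatim inside your decomposition. With that one citation your proof is complete and essentially identical to the paper's; without it, the crux you correctly identified remains open, and attempting to prove it from scratch is a harder project than the theorem itself.
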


\paragraph{Discussion.}
Theorem~\ref{thm:main} generalizes the known tail bound on CountSketch, since if we let $\alpha = 1$ and $\sigma = 0$ we  recover Theorem~\ref{thm:count-sketch}.

For $\varepsilon < 1$, if we choose suitable $\sigma = O(\sqrt{k \log(1/\delta)}/\varepsilon)$ the Private CountSketch satisfies $(\varepsilon,\delta)$-differential privacy.
For noise level $\sigma \geq \Delta$ and deviation $\gamma < \sigma$ we then get:
\[
	 \Pr[ |\bar{x}_\ell - x_\ell| > \gamma ] < 2\,\exp\left(-\Omega\left(\gamma^2 \varepsilon^2 / \ln(1/\delta) \right)\right) \enspace . 
\]
Up to the hidden constant in the order-notation this is \emph{identical} to the tail bound for the noise of the Gaussian distribution applied directly to $x$ in order to ensure $(\varepsilon,\delta)$-differential privacy.

\subsection{Analysis of Private CountSketch}

The message of our paper is that Gaussian noise composes
very nicely with the CountSketches as analysed by \citep{mintonP14improved}.
All we use is that the CountSketch estimator is the median of symmetric variables.

For the median trick on indendent symmetric random variables
$C_1,\ldots,C_k$, \citet[Lemma 3.3]{mintonP14improved} show that for $\gamma > 0$ where
$\Pr[|C_i|\leq \gamma]\geq p$ it holds that
\begin{equation}\label{eq:MP-symm-median}
  \Pr[|\text{median}_{i\in[k]}C_i|>\gamma]\leq 2 \exp(-p^2k/2)\enspace.
  \end{equation}
In our case, for $i=1,\ldots,k$, we have a symmetric random variable $A_i$ (representing
a simple estimator error before noise) to which we add
Gaussian noise $B_i\sim N(0,\sigma^2)$.
The estimator $C_i=A_i+B_i$ is clearly also symmetric.
Basic properties of the Gaussian distribution implies
that
\begin{itemize}
 \item[(a)] if $\gamma\leq \sigma$ and $|A_i|\leq \sigma$ then
   $\Pr[|C_i|\leq\gamma]=\Omega(\gamma/\sigma)$.
 \item[(b)] if $\gamma\geq \sigma$ and $|A_i|\leq\gamma$
   then $\Pr[|C_i|\leq\gamma]=\Omega(1)$.
\end{itemize}
In our concrete case (details below) we will have a furher property of $A_i$,
namely
\begin{itemize}
\item[(c)] There exists $\Delta > 0$ such that for every $\Eps\in[0,1]$,
  $\Pr[|A_i|\leq \Eps\Delta]=\Omega(\Eps)$.
\end{itemize}

\medskip

\begin{lemma}\label{lem:add-Gauss} Assuming (a), (b), and (c) then for every $\Eps'\in[0,1]$,
\[\Pr[|C_i|\leq \Eps'\max\{\Delta,\sigma\}]=\Omega(\Eps').\]
  \end{lemma}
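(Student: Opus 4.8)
The plan is to bound the single-estimator anti-concentration probability $\Pr[|C_i|\leq\gamma]$, where I write $\gamma=\Eps'\max\{\Delta,\sigma\}$, by combining (a)--(c) through the elementary conditioning inequality
\[
\Pr[|C_i|\leq\gamma]\;\geq\;\Pr[|C_i|\leq\gamma \mid |A_i|\leq t]\cdot\Pr[|A_i|\leq t]
\]
for a suitably chosen threshold $t$. I would use (a) or (b) to lower bound the conditional factor and (c) to lower bound the marginal factor. The point is that (a) and (b) are statements about the randomness of $B_i$ for each fixed realization of $A_i$ in the indicated range (recall $C_i=A_i+B_i$ with $B_i\sim N(0,\sigma^2)$ independent of $A_i$), so the same lower bound survives integration against the conditional law of $A_i$ given $|A_i|\leq t$.

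I would split on whether $\gamma$ lies below or above the noise scale $\sigma$. If $\gamma\leq\sigma$, take $t=\sigma$ and apply (a) to get $\Pr[|C_i|\leq\gamma\mid|A_i|\leq\sigma]=\Omega(\gamma/\sigma)$, while (c) with $\Eps=\min\{1,\sigma/\Delta\}$ gives $\Pr[|A_i|\leq\sigma]=\Omega(\min\{1,\sigma/\Delta\})$. Multiplying, if $\sigma\geq\Delta$ then $\max\{\Delta,\sigma\}=\sigma$ and the bound is $\Omega(\gamma/\sigma)=\Omega(\Eps')$; if $\sigma<\Delta$ then $\max\{\Delta,\sigma\}=\Delta$ and the bound is $\Omega\!\left(\tfrac{\sigma}{\Delta}\cdot\tfrac{\gamma}{\sigma}\right)=\Omega(\gamma/\Delta)=\Omega(\Eps')$. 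Either way the product collapses to $\Omega(\Eps')$.

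In the complementary case $\gamma>\sigma$, observe that this forces $\sigma<\Delta$, so $\max\{\Delta,\sigma\}=\Delta$ and $\gamma=\Eps'\Delta\leq\Delta$. Here I would take $t=\gamma$ and apply (b), giving conditional probability $\Omega(1)$, while (c) with $\Eps=\Eps'$ gives $\Pr[|A_i|\leq\gamma]=\Pr[|A_i|\leq\Eps'\Delta]=\Omega(\Eps')$; the product is again $\Omega(\Eps')$.

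The Gaussian-density estimates underpinning (a) and (b) are taken as given. The step I expect to need the most care is the constant bookkeeping across the case split: I must check that the hidden constants are absolute (independent of $\Delta,\sigma,\gamma,\Eps'$) and, in the first case, that the factor $\min\{1,\sigma/\Delta\}$ supplied by (c) cancels exactly against the $1/\sigma$ coming from (a) so that every branch yields the same $\Omega(\Eps')$ conclusion. I would also want to confirm rigorously that (a) and (b) are genuinely pointwise-in-$A_i$ and can be integrated as claimed, which is the only non-cosmetic assumption in the conditioning step.
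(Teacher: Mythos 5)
Your proof is correct and follows essentially the same route as the paper's: condition on a bound for $|A_i|$, apply (a) or (b) to the Gaussian conditional factor and (c) to the marginal factor, and in the regime $\sigma<\Delta$, $\gamma\leq\sigma$ use the same multiplicative cancellation $(\sigma/\Delta)\cdot(\gamma/\sigma)=\Eps'$ that the paper writes as $\Eps_1\Eps_2=\Eps'$. Your two-case split (on $\gamma$ versus $\sigma$) merely regroups the paper's three cases, and your explicit remark that (a) and (b) hold pointwise in $A_i$ and hence survive integration is a correct reading of how the paper uses them.
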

\begin{proof}
  If $\Eps'\Delta\geq\sigma$ then using (b) and (c),
  \begin{align*} \Pr[|C_i|\leq \Eps'\max\{\Delta,\sigma\}] & = \Pr[|C_i|\leq \Eps'\Delta]\\
    &\geq
  \Pr[|C_i|\leq \Eps'\Delta\mid|A_i|\leq \Eps'\Delta]\cdot
  \Pr[|A_i|\leq \Eps'\Delta]\\
  &=\Omega(1) \cdot \Omega(\Eps') = \Omega(\Eps')\enspace.
  \end{align*}
  If $\Eps'\Delta\leq\sigma$ and $\Delta\geq\sigma$,
  we set $\Eps_1=\sigma/\Delta$
  and $\Eps_2=\Eps'\Delta/\sigma$ such that $\Eps_1\Eps_2=\Eps'$.
  Using (a) and (c) we get 
  \begin{align*}\Pr[|C_i|\leq \Eps'\max\{\Delta,\sigma\}] &=\Pr[|C_i|\leq \Eps'\Delta] \\
    &=
    \Pr[|C_i|\leq \Eps'\Delta\mid |A_i|\leq \sigma]\cdot\Pr[|A_i|\leq \sigma]\\
    &=
  \Pr[|C_i|\leq \Eps_2\sigma\mid|A_i|\leq \sigma]\cdot
  \Pr[|A_i|\leq \Eps_1\Delta]\\
  &=\Omega(\Eps_2)\cdot \Omega(\Eps_1)=\Omega(\Eps')\enspace.
  \end{align*}
  Finally, if $\Eps'\Delta < \sigma$ and $\Delta<\sigma$,
  we use
  (a) and (c) to get 
  \begin{align*}\Pr[|C_i|\leq \Eps'\max\{\Delta,\sigma\}] &= \Pr[|C_i|\leq \Eps'
      \sigma] \\
    &\geq
    \Pr[|C_i|\leq \Eps'\sigma\mid |A_i|\leq\Delta\leq \sigma]\cdot\Pr[|A_i|\leq \Delta]\\
  &=\Omega(\Eps')\cdot \Omega(1)=\Omega(\Eps')\enspace.
  \end{align*}
\end{proof}

We now return to the details of Private CountSketch.
For a given $\ell\in[d]$, Private CountSketch finds~$k$ simple estimates
\[ X_i= s_i(\ell) \left( CS(x)_{i,h_i(\ell)} + \nu_{i, h_i(\ell)} \right)\]
of $x_\ell$ and returns the median, where $\nu_{i, h_i(\ell)}\sim  N(0,\sigma^2)$
is Gaussian noise. Since the noise is symmetric, we get
exactly the same distribution of $X_i$ if we compute it as
\[ X_i= s_i(\ell) \left( CS(x)_{i,h_i(\ell)}\right) + B_i\]
where $B_i\sim  N(0,\sigma^2)$. The point is that
we can fix the variables in $s_i(\ell) \left( CS(x)_{i,h_i(\ell)}\right)$
first, and with the sign $s(i)$ fixed, 
$s_i(\ell) \nu_{i, h_i(\ell)} \sim  N(0,\sigma^2)$.

In the above $s_i(\ell) ( CS(x)_{i,h_i(\ell)})$ is the $i$th simple
estimator of $x_\ell$ in the CountSketch. It has error
\[A_i=s_i(\ell) ( CS(x)_{i,h_i(\ell)})-x_\ell\]
so the error of Private CountSketch is distributed as $C_i=A_i+B_i$.
For the error $A_i$ from CountSketch \citet{mintonP14improved} (proof of Theorem 4.1, using Corollary 3.2) proved that (c) is satisfied with $\Delta=||\text{tail}_{b/2}(x)||_2 / \sqrt{b}$.
Hence, by Lemma \ref{lem:add-Gauss},
  $\Pr[|C_i|\leq \Eps\max\{\Delta,\sigma\}]=\Omega(\Eps)$ for every $\Eps\in[0,1]$.
Now, by \req{eq:MP-symm-median} for every $\Eps\in[0,1]$,
\[\Pr[|\text{median}_{i\in[k]}C_i|>\Eps\max\{\Delta,\sigma\}]\leq 2 \exp(-\Omega(\Eps^2k))\enspace.\]
If $m$ is the index of the median $C_i$, then $X_m=x_\ell+C_m$ is
the Private Countsketch estimator of $x_\ell$, so this completes the proof of
Theorem \ref{thm:main}.

We note that Theorem \ref{thm:main} could also be proved using the framework
of \citep{mintonP14improved}, exploiting that Gaussian variables have
non-negative Fourier transform. However, this requires that the
noise free error $A_i$ also has non-negative Fourier transform. Indeed
this is the case for CountSketch as proved in \citep{mintonP14improved}
if $b>1$. However, our proof does not require $A_i$ to have non-negative Fourier
transform, and this could prove useful in other contexts where Gaussian noise
is added.

\paragraph*{Limitations.}
Our error analysis relies on the assumption that the sign hash functions used are fully random.
This assumption may be expensive to realize in practice, though it is always possible by storing an explicit, random list of all hash values (as done in our experiments). 
Alternatively, as observed in~\citep{mintonP14improved} we could use the pseudorandom generator of ~\citet{nisan1990pseudorandom} to implement our hash functions, which requires space that exceeds the space for the sketch by a logarithmic factor.
Furthermore, these hash functions can be shared among several CountSketches, reducing the space overhead.

Another issue to be aware of is that though CountSketch will be able to significantly compress sufficiently skewed or sparse input vectors, the size of a CountSketch with good accuracy can in general be larger than the size of the original vector.

\subsection{Comparison to Other Private Sketches}\label{sec:comparison}
It is instructive to compare Private CountSketch to other private sketches that have been studied in the literature, in particular Private Count-Min Sketch~\citep{melisDC16efficient} and Count Mean Sketch~\citep{apple2017learning}.

\paragraph{Private Count-Min Sketch.}
The Count-Min Sketch~\citep{cormode2005improved} is a well-known sketch that corresponds to CountSketch without the sign functions (or alternatively with constant $s_i(\ell) = 1$).
Its estimator is similar to that of CountSketch except that it works by taking the \emph{minimum} of the $k$ independent estimates obtained from the sketch.
The minimum estimator has a \emph{one-sided} error guarantee (never underestimates $x_\ell$) in terms of $||\text{tail}_{b/2}(x)||_1$ for vectors $x$ that have only nonnegative entries. Similar to CountSketch its failure probability decreases exponentially with the number $k$ of repetitions.

A private version of Count-Min Sketch, studied by~\citet{melisDC16efficient}, works by adding independent (Laplace) noise to each entry of the sketch.
This of course breaks the one-sided error guarantee.
Another variant, with a non-negative binomial noise distribution (for integer valued vectors), was studied by~\citet{ghazi2021power}.
In both cases the minimum estimator is biased and its variance grows with $k$.
A third possibility would be Gaussian noise, but again the variance of the minimum estimator can be large (at least $O(k/\log k)$ for the minimum of $k$ Gaussians with variance $k$).

\paragraph{Private Count-Mean Sketch.}
We use Count-Mean Sketch to refer to CountSketch, where the estimator is the \emph{mean} rather than the median.
This sketch was studied by~\citet{apple2017learning} in the \emph{local} model, where it was only used on $1$-hot vectors with a single~$1$.
One can ask if the mean estimator is useful in other models of differential privacy, but unfortunately it lacks the robustness properties of CountSketch.
In particular, it gives error bounds in terms of the norm $||x||_2$ rather than $||\text{tail}_{b/2}(x)||_2$ which means that it is not robust against outliers or adversarial data.

\paragraph{Local differential privacy}

\citet{zhou2022locally} consider differentially private encodings of $t$-sparse vectors with nonzero values in $[-1,1]$.
They use a CountSketch with a \emph{single} repetition to encode a $t$-sparse vector, optionally apply a clipping step, and then add Laplace noise to ensure privacy.
The main use case is the local differential privacy (LDP) model, where each noisy CountSketch is sent it to an aggrator that sums the estimates of $n$ users.

\citet{kane2014sparser} have shown that the norm of a CountSketch, $||Ax||_2$, with $k=O_{\gamma}(\log(1/\delta))$ repetitions is within a factor $1\pm\gamma$ from $\sqrt{k} ||x||_2$ with probability at least $1-\delta$.
In particular, the CountSketch of a $t$-sparse vector $x$ with nonzero values in $[-1,1]$ has norm $||Ax||_2 < 2 \sqrt{k} ||x||_2 \leq 2\sqrt{kt}$ with probability $1-\delta$.
Applying clipping to ensure norm at most $2\sqrt{kt}$ and scaling the Gaussian noise by $2\sqrt{kt}$ our results imply an alternative LDP encoding of $t$-sparse vectors in $[-1,1]^d$ that differs from the protocol of~\citet{zhou2022locally} by offering smaller error (in fact matching their lower bound) at the expense of larger communication complexity.

\section{Experiments}

We have conducted experiments to empirically investigate the properties of private CountSketch.
Our main result ignores constant factors in the exponent of the bound on failure probability, but we see in experiments that these constants are very reasonable.
All experiments can be recreated by running a Python script available
\anonymizedversion{as supplementary material} \finalversion{on GitHub~\footnote{\url{https://github.com/rasmus-pagh/private-countsketch/releases/tag/v1.0}}} (runs in $\sim$13 minutes on an M1 MacBook Pro).
Density plots are smoothed using the Seaborn library's {\tt kdeplot} function with default parameters.

\subsection{Median of normals}

As a warm-up we consider the setting in which a function $f(x)$ is released $k$ times independently, $r_1,\dots,r_k \sim \mathcal{N}(f(x),k)$, for some integer $k$.
The magnitude of the noise is chosen such that the privacy is bounded independently of $k$, e.g., if $f(x)$ has sensitivity 1 then $(r_1,\dots,r_k)$ satisfies $1/2$-zero-concentrated differentially privacy (different parameters can be achieved by scaling the magnitude of the noise).
Each release has noise of expected magnitude $\Theta(\sqrt{k})$, but the median has noise that can be bounded independently of $k$, as illustrated in Figure~\ref{fig:median-of-normals}.
This gives a form of differentially private secret sharing, where $i$ of $k$ releases can be combined to yield an estimator with noise of magnitude $\Theta(\sqrt{k/i})$.
Taking the mean value also gives a low-variance estimator, but mean values are sensitive to outliers and are not robust to adversarially corrupted data.

\begin{figure}
\centering
\begin{minipage}{.49\textwidth}
  \centering
  \includegraphics[width=\linewidth]{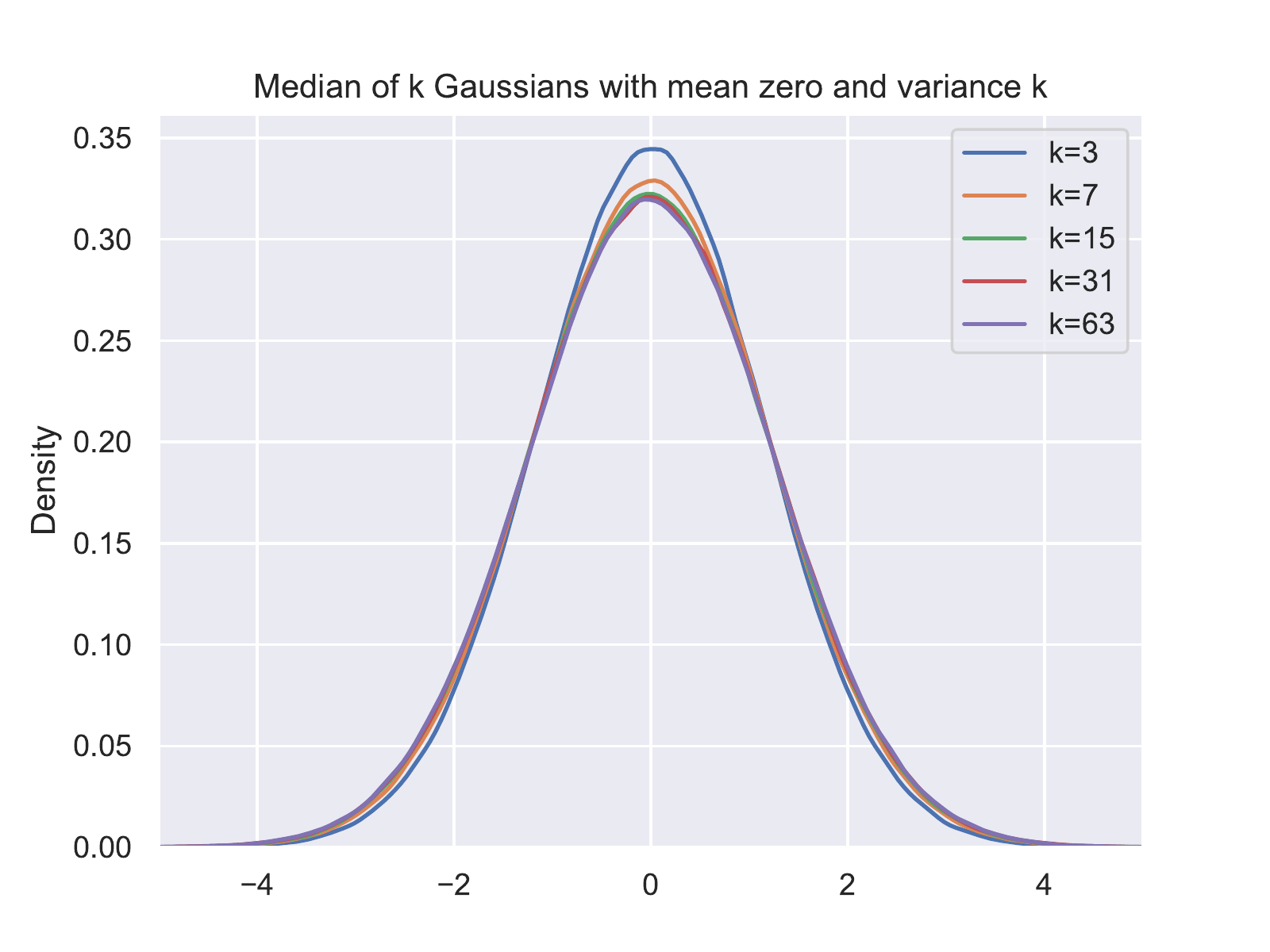}
  \caption{It is well-known that the median of $k$ Gaussians with variance $k$ is itself subgaussian, with variance independent of $k$. This illustrates a ``best case'' for Private CountSketch, in which  estimators $X_i$ (w/o noise) provide the exact answer.}
  \label{fig:median-of-normals}
\end{minipage}\hfill
\begin{minipage}{.49\textwidth}
  \centering
  \includegraphics[width=\linewidth]{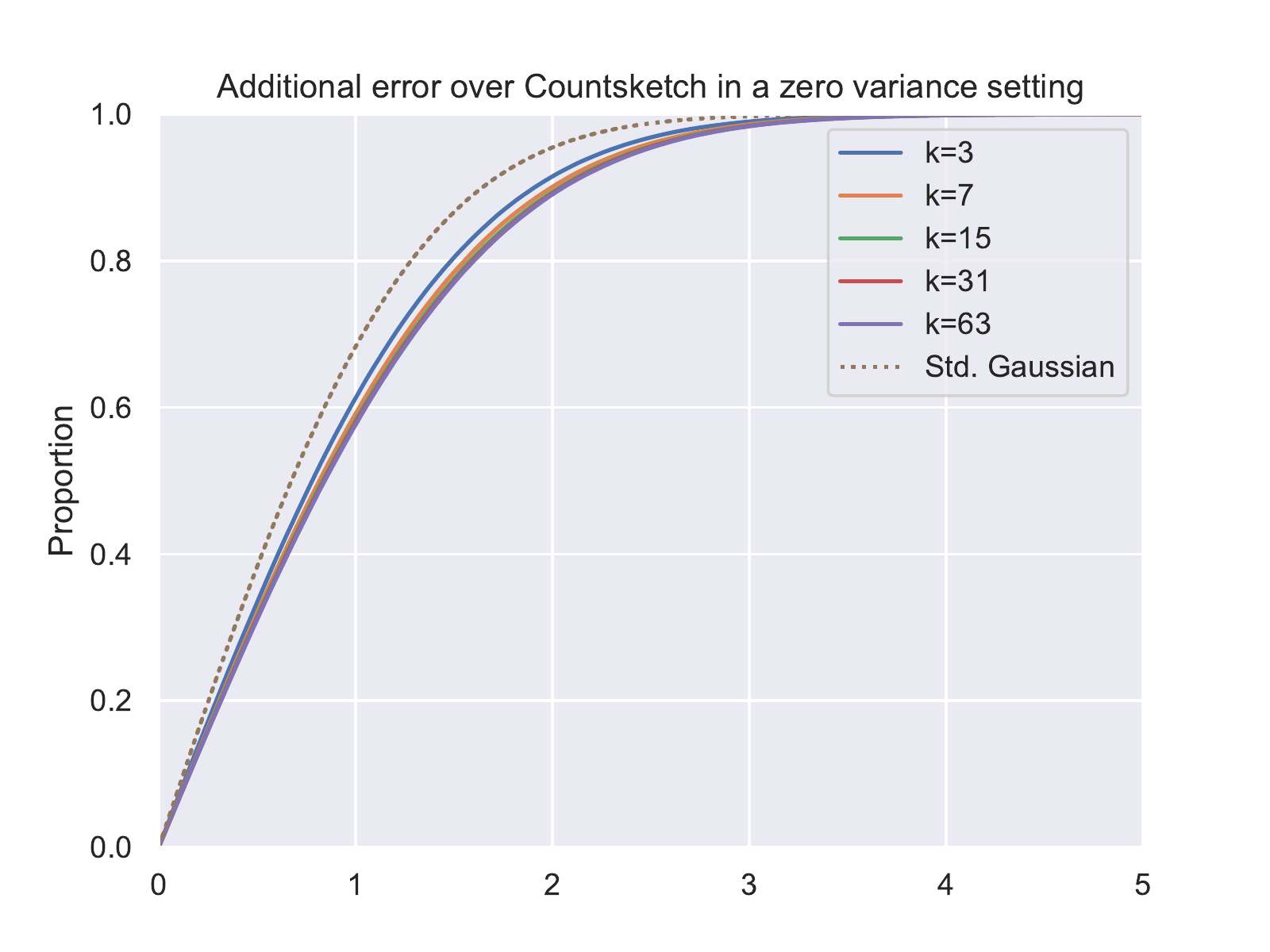}
  \caption{Distribution of noise magnitude for Private CountSketch in the setting where CountSketch estimators have zero variance. The magnitude is only slightly higher than a standard Gaussian, achieving the same level of privacy.}
    \label{fig:zero-variance-setting}
\end{minipage}
\end{figure}

\subsection{The zero-variance CountSketch setting}

Next, we consider the idealized setting in which CountSketch itself does not have any error. This corresponds to letting the table size $b$ go to infinity, so we would hope to come close to the noise achieved by applying the Gaussian mechanism directly on the vector $x$.
More generally, this is indicative of the situation in which the noise added to achieve differential privacy dominates the noise coming from randomness in CountSketch.
Figure~\ref{fig:zero-variance-setting} shows the cumulative distribution function for the absolute value of the error obtained by Private CountSketch, for different values of $k$.
(Of course, in this setting choosing $k=1$ suffices to achieve a low failure probability, but a large value of $k$ is needed in general to ensure few estimation failures.)
As can be seen, the noise distributions are quite close to that the standard Gaussian mechanism operating directly on $x$.
To achieve a desired level of privacy, the noise in both cases has to be scaled appropriately, e.g., to achieve $(\varepsilon,\delta)$-differential privacy for $\varepsilon < 1$ it suffices to multiply the noise by a factor $\sqrt{2 \ln(1/\delta)} / \varepsilon$.

\begin{SCfigure}
  \includegraphics[width=0.5\linewidth]{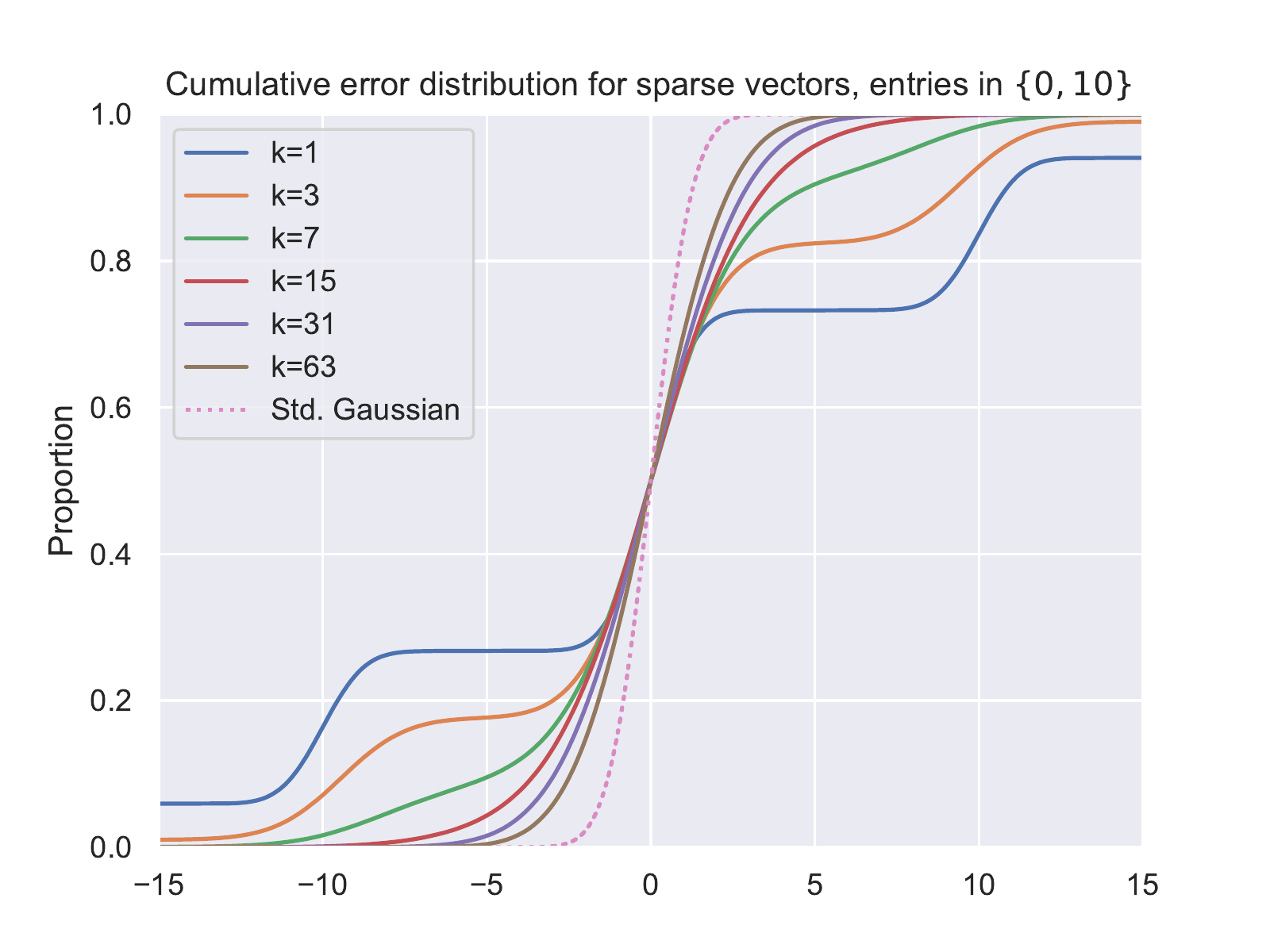}
  \vspace{-3mm}
  \caption{Error distribution for Private CountSketch on sparse vectors for various number of repetitions, compared to making the output differentially private using a standard Gaussian, giving the same privacy guarantee.
  }
  \label{fig:sparse-vectors}
\end{SCfigure}

\subsection{Sparse vectors}
Next we investigate the error distribution on $t$-sparse vectors, where at most $t$ entries are non-zero. It is well-known that a CountSketch with $k = O(\log d)$ repetitions each using space $b = O(t)$ is able to reconstruct $t$-sparse vectors without any error, with high probability.
Thus we expect the error from Private CountSketch, with a sufficiently large number of repetitions, to be similar to the error obtained by applying the Gaussian mechanism to the raw vectors.
Figure~\ref{fig:sparse-vectors} shows the cumulative error distribution for a Private CountSketch (event-level privacy) representing a $t$-sparse vector in which every nonzero entry has a value of~10, using $k$ repetitions with table size $b=t$. As can be seen, once the number of repetitions goes to 15 or more, the error distribution becomes comparable to that of the Gaussian mechanism applied directly to the sparse representation.

\subsection{Real-world examples}

\paragraph*{Population data with group privacy.}

To illustrate how Private CountSketch may be used on real-world data, we first consider the noise of a sketch of population counts in about 40,000 cities\footnote{Retrieved 2022-01-27 from 
\url{https://simplemaps.com/data/world-cities}}.
We can consider this as a very sparse, high-dimensional data set indexed by the names of cities (among the set of all possible strings).
In this way, the sketch does not need to contain any direct information on the set of cities whose population counts are stored, though we will be able to infer such membership from auxiliary information on population (if this number is sufficiently high).

We aim for \emph{group privacy} for large sets of people (see~\citep{dwork2014algorithmic} for a formal definition) by setting the target noise magnitude on estimates high, from $10^4$ to $10^5$.
Intuitively, this hides the contribution to the sketch from any not too large group of people.
The table size $b$ of the sketches is chosen to roughly balance the noise from the CountSketch itself and the Gaussian mechanism.

\begin{figure}
  \centering
    \includegraphics[width=.48\linewidth]{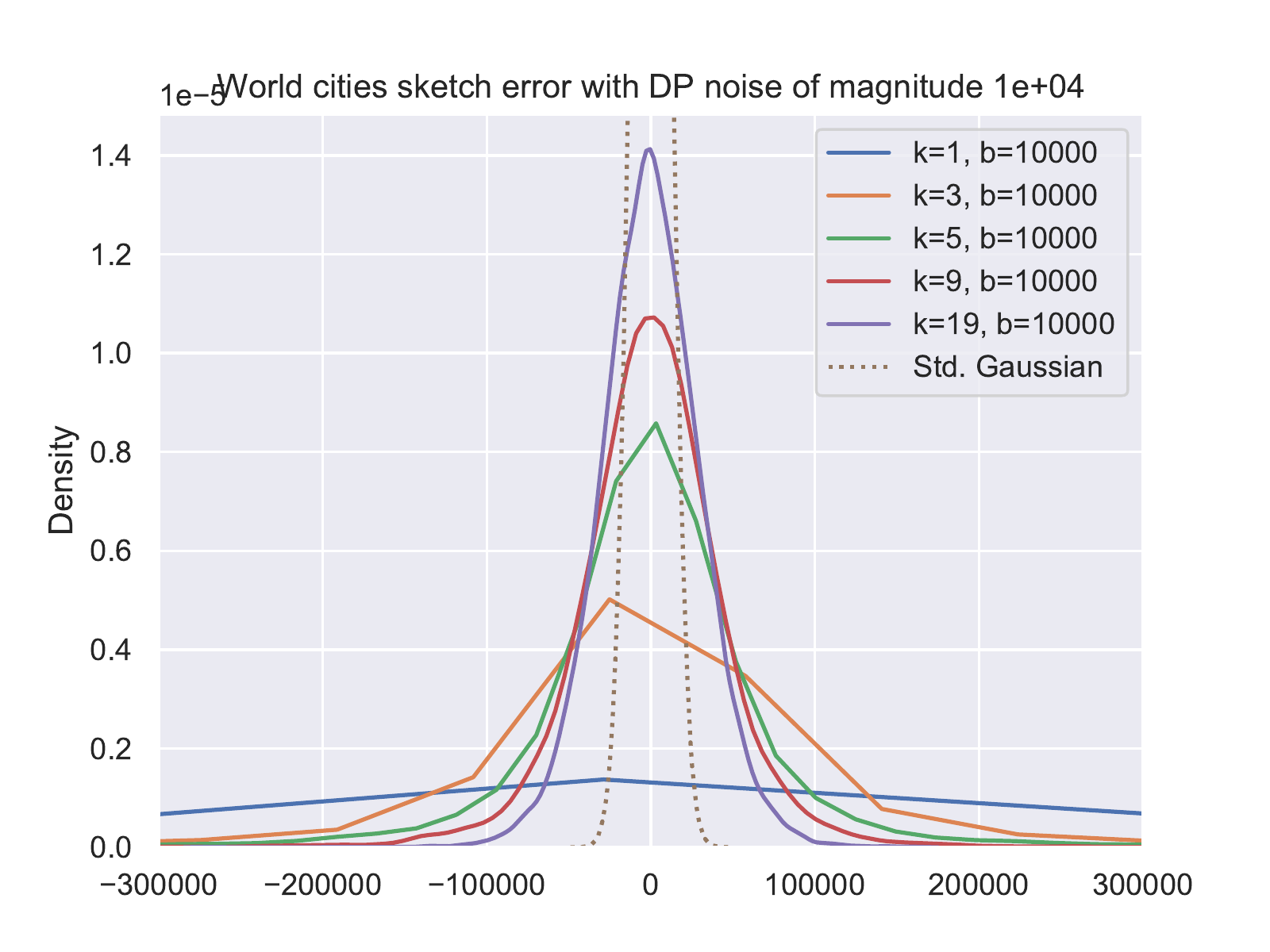}
    \includegraphics[width=.48\linewidth]{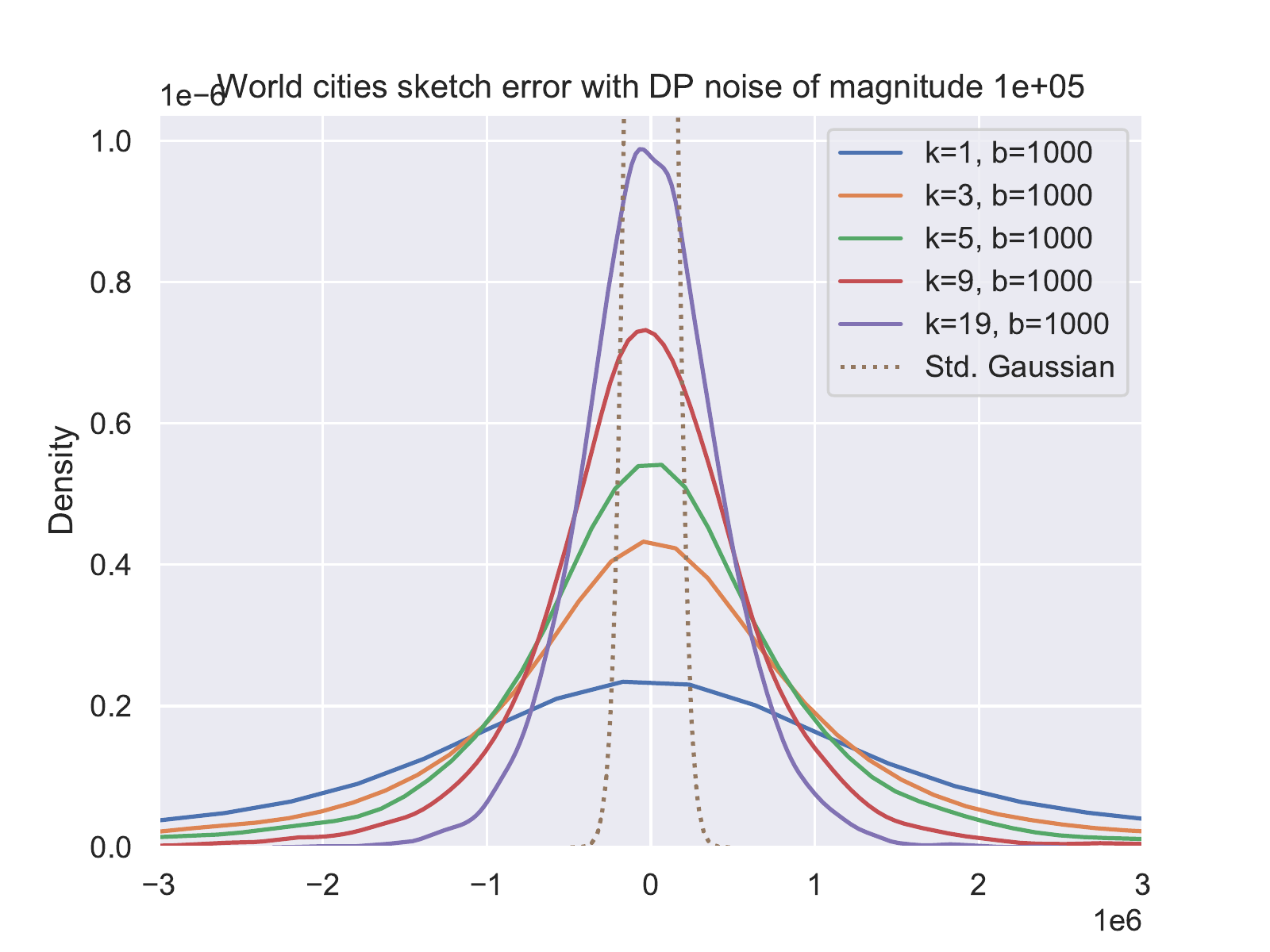}
    \caption{Distribution of noise for Private CountSketch on the world cities dataset with table size $b=10,000$ (left) and $b=1000$ (right) for various number of repetitions $k$. The scale of the noise added for differential privacy is $\sigma = 10,000$ (left) and $\sigma = 100,000$ (right) so the sketch achieves strong group privacy. The Gaussian with standard deviation $\sigma$ is included for comparison.}
    \label{fig:world-cities}
\end{figure}

Figure~\ref{fig:world-cities} shows the figure with noise at scale $10^4$ (left) and at scale $10^5$ (right).
As can be seen, the tails of the noise become noticably thinner as $k$ increased (of course at the cost of a factor $k$ in sketch size).
Some of the sketches have size $kb$ larger than the original data set (the sparsity of the vector indexed by city names) so the point of sketching is not compression per se, though one would expect to see compression for larger and more skewed data sets.

\paragraph*{Market basket data.}

Finally, we consider representing two sparse histograms based on market basket data sets obtained from the FIMI data collection, \url{http://fimi.uantwerpen.be/data/}:

\begin{itemize}
  \item The {\tt kosarak} dataset (collected by Ferenc Bodon) contains click-stream data of a Hungarian on-line news portal, a set of click IDs per user session.
  We limit the size of each set to 100 clicks, resulting in 40148 distinct click IDs and 7264322 clicks in total.
  The high-dimensional vector considered represents the number of occurrences of each click ID.
  \item The {\tt retail} dataset (collected by Tom Brijs) contains the shopping basket data from a Belgian retail store, a set of item IDs per customer.
  We limit the size of each set to 30 items, resulting in 16243 distinct item IDs and 888317 items in total.
  The high-dimensional vector considered represents the number of purchases of each item.
\end{itemize}

\begin{figure}
  \centering
    \includegraphics[width=.48\linewidth]{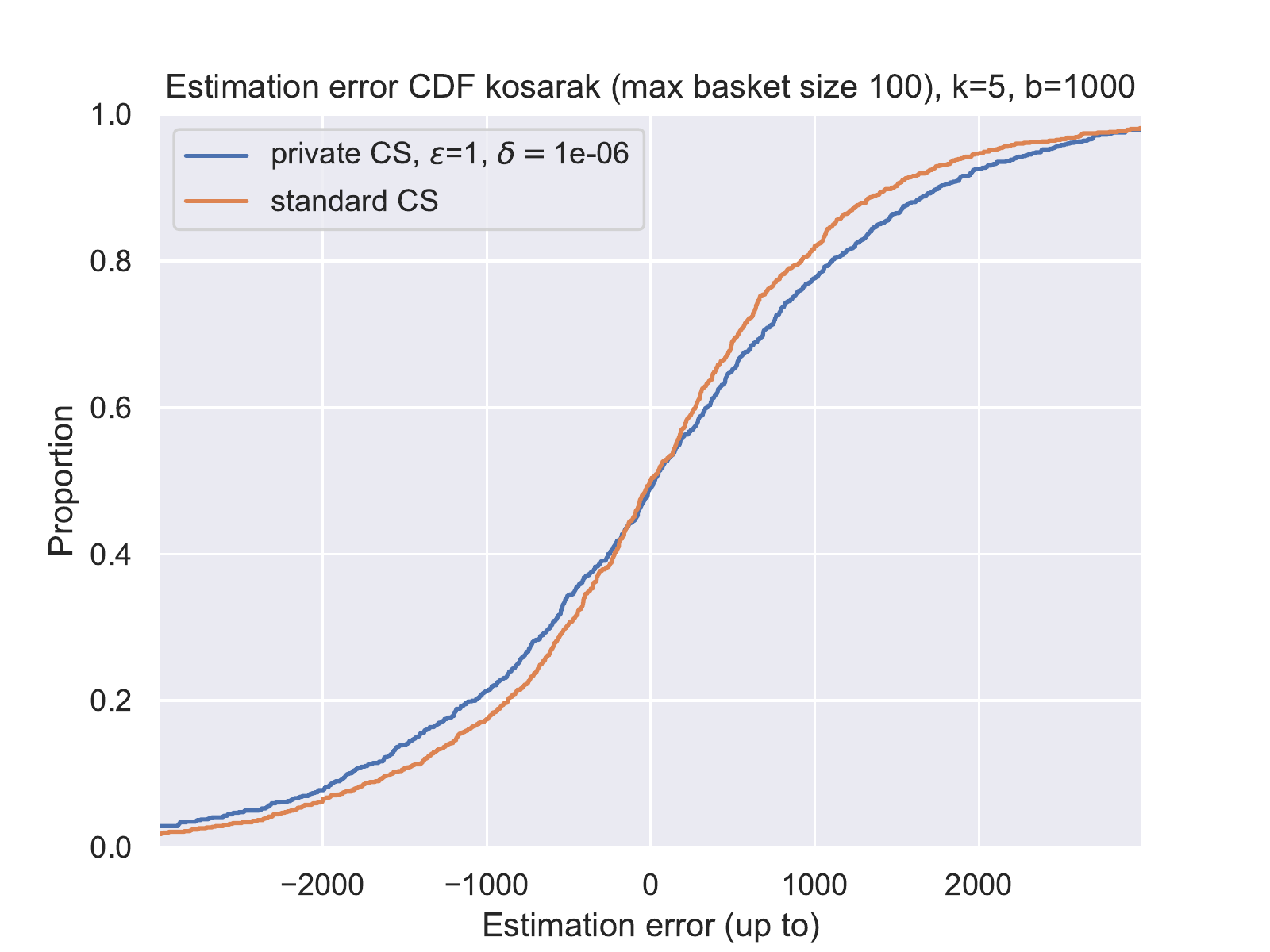}
    \includegraphics[width=.48\linewidth]{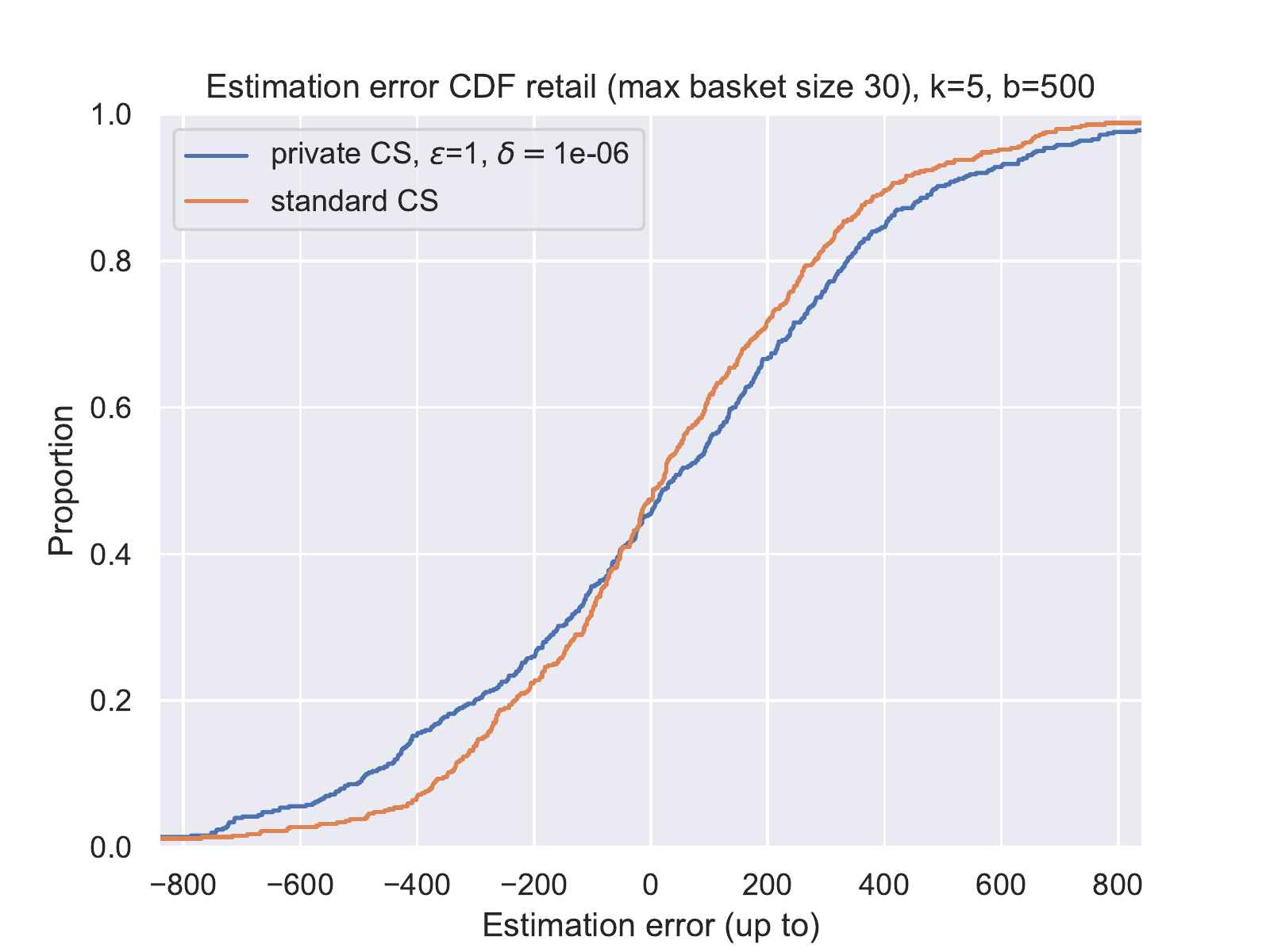}
    \caption{Cumulative distribution of noise for Private CountSketch on two market basket datasets, {\tt kosarak} (left) and {\tt retail} (right). Privacy is with respect to adding or removing a single basket, where baskets are truncated to a maximum size of 100 and 30, respectively.}
    \label{fig:market-basket}
\end{figure}

For both datasets we constructed CountSketches of size considerably smaller than the number of distinct IDs: $kb=5000$ sketch entries for {\tt kosarak} and $kb=2500$ entries for {\tt retail}.
Figure~\ref{fig:market-basket} shows the empirical cumulative distribution functions of the error of CountSketch and Private CountSketch for the two datasets.
For privacy parameters $\varepsilon = 1$ and $\delta = 10^{-6}$ (with respect to a single market basket) we see that the error of Private CountSketch is nearly as well concentrated as the error of CountSketch.

\section{Conclusion}

We have seen that CountSketch can be made differentially private at essentially the smallest conceivable cost: 
Coordinates of the sketched vector are estimated with error that is close to the maximum of the error from non-private CountSketch and the error necessary for differential privacy without sketching.
A problem we leave is to obtain the same error bound with explicit, space-efficient classes of hash functions.

Since CountSketch can represent $t$-sparse vectors without error, for $k=O(\log d)$, this implies a new differentially private representation of such vectors that uses space $O(t \log d)$, is \emph{dynamic} in the sense that the set of nonzero entries can be updated, and which keeps the level of noise close to the best possible in the non-sparse setting.
This also implies that we can get good estimates of dot products between standard sparse vectors and vectors represented using Private CountSketch.
Less clear is how to best estimate a dot product between two vectors given as Private CountSketches.
This is related to work of~\citet{stausholm2021improved} on differentially private Euclidean distance estimation, though one could hope for error guarantees in terms of the norm of the tails of the two vectors.

Finally, we note that~\citet{cohen2022robustness} recently used differential privacy techniques in connection with CountSketch in order to achieve robustness against adaptive adversaries that attempt to find elements that appear to be heavy hitters but are in fact not. It is worth investigating whether Private Countsketch has similar robustness properties.

\begin{ack}
  We would like to thank the anonymous reviewers for their help with improving the exposition, and in particular with clarifying the relationship between our work and~\citet{mintonP14improved}.
  The authors are affiliated with Basic Algorithms Research Copenhagen (BARC), supported by the VILLUM Foundation grant 16582.
  Rasmus Pagh is supported by a Providentia, a Data Science Distinguished Investigator grant from Novo Nordisk Fonden.
\end{ack}

\newpage

\bibliographystyle{icml2022}
\bibliography{bibliography}

\end{document}